\newtheorem{theorem}{{Theorem}}
\newtheorem{remark}{{Remark}}
\newtheorem{lemma}{Lemma}
\newtheorem{corrollary}{Corrollary}
\newcommand{\diag}{\mathrm{diag}}
\newcommand{\re}{\mathrm{Re}}
\newcommand{\im}{\mathrm{Im}}
\DeclareMathOperator{\dn}{\mathrm{dn}}
\DeclareMathOperator{\sn}{\mathrm{sn}}
\DeclareMathOperator{\cn}{\mathrm{cn}}
\def\openone{\leavevmode\hbox{\small1\kern-3.3pt\normalsize1}}
\def\im{\mbox{Im\,}}
\def\diag{\mbox{diag\,}}
\def\ad{\mbox{ad\,}}
\def\re{\mbox{Re\,}}
\def\m{{\mathbf{m}}}
\def\q{{\mathbf{q}}}
\def\p{{\mathbf{p}}}
\def\T{{\mathbf{T}}}
\def\q{{\mathbf{q}}}
\def\bbbr{{\Bbb R}}
\def\bbbc{{\Bbb C}}
\def\bbbz{{\Bbb Z}}
\def\bbbd{{\Bbb D}}
\def\ad{\mbox{ad}\,}
\def\diag{\mbox{diag}\,}
\def\re{{\rm Re}\,}
\def\im{{\rm Im}\,}
\def\fr#1{{\mathfrak{#1}}}
\def\openone{\leavevmode\hbox{\small1\kern-3.3pt\normalsize1}}
\begin{document}

\title{Fundamental Analytic Solutions for the Kulish-Sklyanin Model with Constant Boundary Conditions}

\author{$^{1, 2, 3}$Vladimir S. Gerdjikov} 
 \email[Corresponding author: ]{vgerdjikov@math.bas.bg}
\author{$^{1}$Aleksandr O. Smirnov}%
 \email{alsmir@guap.ru}
\affiliation{ 
$^{1}$Sankt-Petersburg State University of Aerospace
Instrumentation, St-Petersburg  B. Morskaya  67A St-Petersburg 190000 Russia \\[5pt]
$^{2}$Institute of Mathematics and Informatics Bulgarian Academy of Sciences \\
 8 Acad. G. Bonchev str. 1113 Sofia Bulgaria
 \\ [5pt]
$^{3}$Institute for Advanced Physical Studies, 111 Tsarigradsko chaussee,
Sofia 1784  Bulgaria
}


\date{\today} 

\begin{abstract}
In the present paper we analyze the construction of fundamental analytic solutions (FAS) for the generalized Kulish-Sklyanin models (KSM) for vanishing (VBC) and constant boundary conditions (CBC). Using FAS one can reduce the direct and inverse scattering problems for the Lax operator to a Riemann-Hilbert problem (RHP). For VBC we find two  FAS $\chi^+(x,t,\lambda) $ and $\chi^-(x,t,\lambda) $ analytic in the upper/lower $\mathbb{C}_\pm$ complex $\lambda$-plane. The RHP consists in: given the sewing function $G(x,t,\lambda)$ to constructing both $\chi^\pm(x,t,\lambda) $ in their regions of analyticity. For CBC the problem becomes more complicated, because now the RHP must be formulated on a Riemannian surface of genus 1.
\end{abstract}

\keywords{Lax representation, Symmetric spaces, Fundamental analytic solutions, Jacobi elliptic functions}

\maketitle

\section{1.  Introduction}

In 1971 Zakharov and Shabat \cite{ZaSha01} discovered the integrability of the nonlinear Schr\"odinger equation (NLS). After the KdV equation, this was the second infinite-dimensional completely integrable system. In addition it found a number of applications in nonlinear optics, in plasma physics importance, hydrodynamics etc. About a year later the same authors discovered the integrability of the NLS equation under constant boundary conditions (CBC) \cite{ZaSha02}. The two equations look very similar:
\begin{equation}\label{eq:nls}\begin{aligned}
 \mbox{NLS$_0$:} \qquad &i \frac{\partial q_0}{ \partial x } + \frac{\partial^2 q_0 }{ \partial x^2 } + 2|q_0(x,t)|^2 q_0(x,t) =0; &\quad \mbox{NLS$_1$:} \qquad &i \frac{\partial q_1}{ \partial x } + \frac{\partial^2 q_1 }{ \partial x^2 } - 2(|q_1(x,t)|^2 -\rho^2) q_1(x,t) =0; \\
 & \lim_{x\to \pm \infty} q_0(x,t) = 0,  &\qquad  & \lim_{x\to \pm \infty} q_1(x,t) = \rho e^{i \theta_\pm},
\end{aligned}\end{equation}
 but in fact are substantially different. The reasons for this differences is not only in the sign of the nonlinearity. Let us first start with the similarities. Both equations allow Lax representations with very similar Lax operators $L_0$ and $L_1$ respectively:
 \begin{equation}\label{eq:}\begin{aligned}
 \mbox{NLS$_0$:} \quad & L_0 \colon i \frac{\partial \psi_0}{ \partial x }  + (Q_0(x,t) - \lambda \sigma_3 )\psi_0(x,t,\lambda)=0, & \quad  \mbox{NLS$_1$:} \quad & L_1 \colon i \frac{\partial \psi_1}{ \partial x }  + (Q_1(x,t) - \lambda \sigma_3 )\psi_1(x,t,\lambda)=0, \\
&  Q_0(x,t) = \left(\begin{array}{cc} 0 & q_0^* \\ q_0 & 0  \end{array}\right), &\quad
&  Q_1(x,t) = \left(\begin{array}{cc} 0 & q_1^* \\ -q_1 & 0  \end{array}\right).
 \end{aligned}\end{equation}
Both equations possess an infinite number of integrals of motion $C_{0;n}$ and $C_{1;n}$, soliton solutions, hierarchy of Hamiltonian formulations, fundamental analytic solutions (FAS) $\chi_0^\pm (x,\lambda)$ and $\chi_1^\pm (x,\lambda)$ etc.

Let us now briefly mention the differences. The Lax operator $L_0$ is not self-adjoint. Its continuous spectrum fills up the real axis of the complex $\lambda$-plane. It allows complex valued eigenvalues that go in pairs $\lambda_{0;j} = \mu_{0;j} \pm i \nu_{0;j}$. The soliton solution of NLS$_0$ is parametrized by 4 parameters: $\mu_{0j}$ - the velocity, $\nu_{0j}$ - the amplitude, and two additional constants that fix up the initial center of mass position $x_{0;j}$ and the initial phase $\phi_{0;j}$. The FAS $\chi_0^\pm (x,\lambda)$ are analytic functions of $\lambda$ for $\lambda \in \mathbb{C}_\pm$ respectively.

At the same time the Lax operator $L_1$ is  self-adjoint. In addition a new constant known as the `chemical potential` $\rho$ appears which is important for the relevant physical processes. Its continuous spectrum fills up two segments of real axis $(-\infty , -\rho] $ and  $[\rho, \infty ) $. It allows only real valued eigenvalues  $-\rho \leq \lambda_{1;j} \leq \rho$. The soliton solution of NLS$_1$ is parametrized by 2 parameters: the eigenvalue $\lambda_{1;j}$ - the velocity, and an additional constants that fixes up the initial center of mass position $x_{0;j}$. The most important differences come with the FAS: $\chi_1^\pm (x,\lambda)$ are analytic functions of $\lambda$ for $\lambda$ in the first and second leafs of the Riemannian surface related to $\sqrt{\lambda^2 -\rho^2}$ respectively, see \cite{ZaSha02, GeKu78, KaIno2, FaTa}.

Another set of important differences is related to their Hamiltonian structures. The phase space related to NLS$_0$ is linear. It is also well known that the inverse scattering transform for NLS$_0$ has the meaning of a generalized Fourier transform \cite{AKNS, DJK, GKh1, GKh2, GI-BJP10a, GI-BJP10b, GIK-TMF44, 152, GeYaV, KaupNewell, IP2, VSG2, ContMat}. For NLS$_1$ the phase space is a nonlinear one. In addition NLS$_1$ has a topological integral of motion $C_{1;0} = \theta_+ - \theta_-$; to each value of this integral there corresponds a sheaf of the phase space of NLS$_1$. In addition the CBC means that the integrals of motion (say, like the density of the particles or the Hamiltonian) need to be regularized. Additional important differences come up due to the fact that $\pm \rho$ are endpoints of the continuous spectrum of $L_1$ at which `virtual` eigenvalues may take place. Anyone can find detailed explanation  of these facts in the monograph by Faddeev and Takhtadjan \cite{FaTa}.

Our aim here will be to attempt an analysis of these differences for the multi-component NLS (MNLS) equations. It is well known that to each symmetric space \cite{Helg} one can relate an MNLS \cite{ForKu*83}. The famous Manakov model \cite{ma74} in which $q_0$ becomes a two-component vector is related to the space $SU(3)/S(U(2)\times U(1))$. In what follows we will fix up our main attention to the Kulish-Sklyanin model (KSM) which is related to a symmetric space of BD.I-type: $SO(2n+1)/SO(2n-1)\times O(2))$; KSM is obtained with $n=3$. Some of these MNLS with CBC have already been studied, see \cite{IROSI, Bersano, CYHo, PriViBio, AblPrinTru*04, Lanning, PriAbBio, PP1, LiBP, Tsu1}. Unfortunately in most of them the authors have imposed constraints on the asymptotic values of the potential $Q_\pm$ that greatly simplify the spectral analysis of the corresponding Lax operator. Below the only condition that we will impose on the asymptotic operators $L_\pm$ will be that they have the same spectrum. Such condition are compatible with the integrability of the corresponding equation. We will also limit ourselves with the KSM because it is important from the point of view of applications (describes spin-1 Bose-Einstein condensate (BEC)) and because it leads to nontrivial results.

In fact such analysis has already been done in a slightly different framework, see \cite{PriViBio, PriAbBio, PP1, LiBP} closer to the one found to describe spin-1 BEC in \cite{IMW04, IMM07, LLMML05, OM, UIW}. There the Lax representation is given by $4\times 4$ matrices which belong to the algebra $sp(4)$ and the corresponding model is termed as the matrix NLS:
\begin{equation}\label{eq:tL}\begin{split}
 \tilde{L} \tilde{\psi} &\equiv i \frac{\partial \tilde{\psi} }{ \partial x } + (\tilde{Q}(x,t) - \lambda \tilde{J} )\tilde{\psi}(x,t,\lambda) =0, \\
 \tilde{Q} &= \left(\begin{array}{cc} 0 & \tilde{\q} \\ \tilde{\p} & 0   \end{array}\right), \qquad \tilde{J} = \left(\begin{array}{cc} \openone_2 & 0 \\ 0 & -\openone_2   \end{array}\right), \qquad \tilde{\q} = \left(\begin{array}{cc} q_0 & q_1 \\ q_{-1} & q_0   \end{array}\right), \qquad
  \tilde{\p} = - \tilde{\q}^\dag.
\end{split}\end{equation}
Note that the algebra $sp(4)$ is isomorphic $so(5)$ and in fact the Lax operator $\tilde{L}$ is the same operator $L$ written in the spinor representation of $so(5)$. The authors of \cite{AblPrinTru*04, PriViBio, PriAbBio, PP1, LiBP} have considered a special class of CBC constrained by the conditions:
\begin{equation}\label{eq:tQpm}\begin{split}
 \lim_{x\to\pm\infty} \tilde{Q} = \tilde{Q}_\pm, \qquad \tilde{Q}_\pm^2 = k_0^2 \openone_2.
\end{split}\end{equation}
This constraint, on one side simplifies substantially the construction of the FAS. On the other hand may have put limits on the physical applications, because it may be rather difficult to ensure that the experimental values of $\tilde{Q}_\pm$ satisfy (\ref{eq:tQpm}).

The paper is organized as follows. In the next Section 2 we outline the main facts about the theory of generalized KSM related to any of the symmetric spaces $SO(2n+1)/SO(2n-1)\times O(2))$ for VBC. In the next Section 3 we analyze the effect of the CBC on the Lax operator.
We derive the characteristic polynomials for $U_\pm$ for generic CBC. These polynomials always have a vanishing eigenvalues $z_0=0$. We consider the nontrivial part of this polynomial and find that it defines a genus 1 curve which we parametrize in terms of the uniformizing variable $u$ and the Jacobi elliptic functions. Next  we construct the FAS of $L$. To this end we need to find the curves in the fundamental domain on which either $\im z_j(u) =0$, or $\im (z_j(u)\pm z_k(u)) =0$, $k\neq j$. These curves define the continuous spectrum of $L_\pm$ and $L$. One can also reduce the ISP for $L$ to a RHP on these curves. Section 4 is for discussions and conclusions.

\section{2. Kulish-Sklyanin model for vanishing boundary conditions}
We start with the Lax representation for the generalized Kulish-Sklyanin model (KSM). We first fix up the class of Lie algebras $\mathfrak{g} \simeq B_r$ and consider the generalized Zakharov-Shabat systems \cite{ZMNP, ZaSha01} which are of the form:
 \begin{equation}\label{eq:L}\begin{split}
 L\psi \equiv i \frac{\partial \psi}{ \partial x } + (Q(x,t) - \lambda J)\psi(x,t,\lambda) =0.
 \end{split}\end{equation}
 For KSM we fix up $J$ to be dual to the basis vector $\vec{e}_1$ in the root system. These choices determine the phase space $\mathcal{M}$ of the relevant NLEE or, in other words, the space of allowed potentials is defined as:
\begin{equation}\label{eq:M}\begin{split}
 \mathcal{M} \equiv \left\{ Q(x,t) = [J,X(x,t)], \qquad X(x,t) \in \mathfrak{g} \right\},
\end{split}\end{equation}
i.e. $Q(x,t)$ belongs to the co-adjoint orbit of $\mathcal{O}_J \in \mathfrak{g}$ passing through $J$. The second Lax operator in the pair for KSM is:
\begin{equation}\label{eq:UV1}\begin{aligned}
M\psi \equiv i \frac{\partial \psi}{ \partial t } + V^{(1)}(x,t,\lambda)\psi (x,t,\lambda) &= 0, \qquad V^{(1)}(x,t,\lambda) = i Q_{1,x}+ V_2(x,t) +\lambda V_1(x,t) -\lambda^2 J,
\end{aligned}\end{equation}
where $Q_1(x,t) =\ad_J Q(x,t)$, $V_1(x,t) =Q(x,t)$ and
\begin{equation}\label{eq:UV4}\begin{split}
Q(x,t) = \left(\begin{array}{ccc} 0 & \vec{q}^T & 0 \\ \vec{p} & 0 & s_0 \vec{q} \\
0 & \vec{p}^Ts_0 & 0 \end{array}\right), \qquad   V_2(x,t) = \frac{1}{2}\ad_{Q_1} Q(x,t) = \left(\begin{array}{ccc} (\vec{q},\vec{p}) & 0 & 0 \\
0 & s_0 \vec{q} \vec{p}^T s_0 - \vec{p}\vec{q}^T & 0 \\ 0 &  0 & -(\vec{q},\vec{p}) \end{array}\right).
\end{split}\end{equation}
In order to get the generalized KSM as the compatibility condition on this Lax pair we have to impose the constraint $Q(x,t) = Q^\dag (x,t)$, or $\vec{p}(x,t) =\vec{q}^*(x,t)$:
\begin{equation}\label{eq:KSm}\begin{split}
i \frac{\partial \vec{q}}{ \partial t} + \frac{\partial^2 \vec{q} }{ \partial x^2 } + 2 (\vec{q}\;^\dag, \vec{q})\vec{q}
- (\vec{q}^T s_0 \vec{q}) s_0 \vec{q}\;^* =0, \quad s_0 = \sum_{k=1}^{2r-1} (-1)^k E_{k,2r-k}.
\end{split}\end{equation}
where now $E_{kn}$ is a $2r-1 \times 2r-1$-matrix with $(E_{kn})_{pj}=\delta_{kp}\delta_{nj}$.
The well known Kulish-Sklyanin model whose integrability has been known since 1981
\cite{KuSkl} corresponds to $\mathfrak{g}\simeq so(5)$; then the vectors $\vec{p} =\vec{q}^*$ are three-component.

For applications of this model to Bose-Einstein condensates and detailed analysis for the inverse spectral transform see \cite{IMW04, IMM07, LLMML05, OM, UIW,  VSG2, PriViBio, Ho}.

The spectral properties of Lax operators and the relevant FAS are well known, see \cite{VSG2, GGK-TMF144, GGK05a, 1, GGK05b, VRG-WMo}. Here we remind the main facts about them.

We will use  the Jost solutions  which are defined by, see \cite{VSG2} and the references therein
\begin{equation}
\lim_{x \to -\infty} \phi(x,t,\lambda) e^{  i \lambda^k J x }=\openone, \qquad  \lim_{x \to \infty}\psi(x,t,\lambda) e^{  i \lambda^k J x } = \openone
 \end{equation}
and the scattering matrix $T(\lambda,t)\equiv \psi^{-1}\phi(x,t,\lambda)$. Due to the special choice of $J$ and
to the fact that the Jost solutions and the scattering matrix take values in the group $SO(2r+1)$ we can use the following
block-matrix structure of $T(\lambda,t)$ and its inverse $\hat{T}(\lambda,t)$:
\begin{equation}\label{eq:25.1}
T(\lambda,t) = \left( \begin{array}{ccc} m_1^+ & -\vec{B}^-{}^T & c_1^- \\
\vec{b}^+ & {\bf T}_{22} & -s_0\vec{b}^- \\ c_1^+ & \vec{B}^+{}^Ts_0 & m_1^- \\ \end{array}\right), \qquad
\hat{T}(\lambda,t) = \left( \begin{array}{ccc} m_1^- & \vec{b}^-{}^T & c_1^- \\
-\vec{B}^+ & \hat{\bf T}_{22} & s_0\vec{B}^- \\ c_1^+ & -\vec{b}^+{}^Ts_0 & m_1^+ \\ \end{array}\right),
\end{equation}
where $\vec{b}^\pm (\lambda,t)$ and $\vec{B}^\pm (\lambda,t)$ are $2r-1$-component vectors,
${\bf T}_{22}(\lambda)$ and $\hat{\bf T}_{22}(\lambda)$ are  $2r-1 \times 2r-1$ blocks and $m_1^\pm (\lambda)$,  $c_1^\pm (\lambda)$ are scalar functions satisfying
$c_1^\pm = 1/2 (\vec{b}^\pm \cdot s_0 \vec{b}^\pm) /m_1^\mp$.

Let us now introduce $\Phi(x,t,\lambda) =\phi(x,t,\lambda) e^{i \lambda J x }$ and  $\Psi(x,t,\lambda) =\psi(x,t,\lambda) e^{i \lambda J x }$.  Then one can check that
$\Psi(x,t,\lambda)$ and $\Phi(x,t,\lambda)$ must satisfy the following Volterra-type equations:
\begin{equation}\label{eq:Psi2}\begin{split}
\Psi(x,t,\lambda) & = \openone + i \int_{\infty}^{x} dy\; e^{-i\lambda J(x-y)} Q(y,t) \Psi(y,t,\lambda) e^{i\lambda J(x-y)}, \\
\Phi(x,t,\lambda) & = \openone + i \int_{-\infty}^{x} dy\; e^{-i\lambda J(x-y)} Q(y,t) \Phi(y,t,\lambda) e^{i\lambda J(x-y)}.
\end{split}\end{equation}
Note that  from the equations (\ref{eq:Psi2}) there follows that the first column $\Phi(x,t,\lambda) $ and the last column of $\Psi(x,t,\lambda) $ are analytic functions of $\lambda$ for $\im \lambda >0$. Likewise the first column $\Psi(x,t,\lambda) $ and the last column of $\Phi(x,t,\lambda) $ are analytic functions of $\lambda$ for $\im \lambda <0$. The middle columns of both $\Psi(x,t,\lambda) $ and $\Phi(x,t,\lambda) $ are defined only on the real $\lambda$ axis.

However, following the ideas of Shabat \cite{Sh*75, Sh*79, ZMNP} we can introduce the integral equations
\begin{equation}\label{eq:Xi2}\begin{split}
\xi^+_{jk}(x,t,\lambda) & = \delta_{jk} + i \int_{\infty}^{x} dy\; (Q(y,t) \xi^+(y,t,\lambda))_{jk}   e^{-i\lambda (J_j - J_k)(x-y)} , \qquad \mbox{for} \quad j <k \\
\xi_{jk}^+(x,t,\lambda) & = \delta_{jk} + i\int_{-\infty}^{x} dy\; (Q(y,t) \xi^+(y,t,\lambda))_{jk} e^{-i\lambda (J_j - J_k)(x-y)} , \qquad \mbox{for} \quad j \geq k.
\end{split}\end{equation}
\begin{equation}\label{eq:Xi20}\begin{split}
\xi^{\prime, +}_{jk}(x,t,\lambda) & = \delta_{jk} + i \int_{\infty}^{x} dy\; (Q(y,t) \xi^{\prime, +}(y,t,\lambda))_{jk}   e^{-i\lambda (J_j - J_k)(x-y)} , \qquad \mbox{for} \quad j \leq k \\
\xi_{jk}^{\prime, +}(x,t,\lambda) & = \delta_{jk} + i\int_{-\infty}^{x} dy\; (Q(y,t) \xi^{\prime, +}(y,t,\lambda))_{jk} e^{-i\lambda (J_j - J_k)(x-y)} , \qquad \mbox{for} \quad j > k.
\end{split}\end{equation}
Both sets of solution $\xi^+(x,t,\lambda)$ and $\xi^{\prime, +}(x,t,\lambda)$ of eqs. (\ref{eq:Xi2}) and (\ref{eq:Xi20}) will be an analytic function for $\im \lambda >0$.

Similarly we can introduce $\xi^-(x,t,\lambda)$ as the solution of the following integral equations:
\begin{equation}\label{eq:Xi3}\begin{split}
\xi^-_{jk}(x,t,\lambda) & = \delta_{jk} + i \int_{-\infty}^{x} dy\; (Q(y,t) \xi^-(y,t,\lambda))_{jk}   e^{-i\lambda (J_j - J_k)(x-y)} , \qquad \mbox{for} \quad j < k \\
\xi_{jk}^-(x,t,\lambda) & = \delta_{jk} + i\int_{\infty}^{x} dy\; (Q(y,t) \xi^-(y,t,\lambda))_{jk} e^{-i\lambda (J_j - J_k)(x-y)} , \qquad \mbox{for} \quad j \geq k.
\end{split}\end{equation}
\begin{equation}\label{eq:Xi30}\begin{split}
\xi^{\prime, -}_{jk}(x,t,\lambda) & = \delta_{jk} + i \int_{-\infty}^{x} dy\; (Q(y,t) \xi^{\prime, -}(y,t,\lambda))_{jk}   e^{-i\lambda (J_j - J_k)(x-y)} , \qquad \mbox{for} \quad j \leq k \\
\xi_{jk}^{\prime, -}(x,t,\lambda) & = \delta_{jk} + i\int_{\infty}^{x} dy\; (Q(y,t) \xi^{\prime, -}(y,t,\lambda))_{jk} e^{-i\lambda (J_j - J_k)(x-y)} , \qquad \mbox{for} \quad j > k.
\end{split}\end{equation}
Both sets of solution $\xi^-(x,t,\lambda)$ and $\xi^{\prime, -}(x,t,\lambda)$ of eqs. (\ref{eq:Xi3}) and (\ref{eq:Xi30}) will be an analytic function for $\im \lambda <0$.

Now we need to recall that any two fundamental solutions of the operator $L$ must be linearly dependent. In particular $\xi^\pm(x,t,\lambda)$ and $\xi^{\prime, \pm}(x,t,\lambda)$ could b expressed through the Jost solutions. The corresponding coefficients can be found evaluating the asymptotics of $\xi^\pm (x,t,\lambda)$ and $\xi^{\prime, \pm}(x,t,\lambda)$ for $x \to \infty$ and $x \to -\infty$. The results are:
\begin{equation}\label{eq:xipm}\begin{aligned}
 \xi^\pm (x,t,\lambda) &= \phi (x,t,\lambda) S^\pm_J(t,\lambda), &\qquad  \xi^\pm (x,t,\lambda) &= \psi (x,t,\lambda) T^\mp_J(t,\lambda) D^\pm_J(\lambda), \\
 \xi^{\prime, \pm} (x,t,\lambda) &= \phi (x,t,\lambda) S^\pm_J(t,\lambda) \hat{D}^\pm_J(\lambda), &\qquad  \xi^{\prime, \pm} (x,t,\lambda) &= \psi (x,t,\lambda) T^\mp_J(t,\lambda) ,
\end{aligned}\end{equation}
where the factors $T_J^\pm (t,\lambda)$, $D_J^\pm (\lambda)$ and $S_J^\pm (t,\lambda)$  are related to the scattering matrix as follows:
\begin{equation}\label{eq:TGauss}\begin{split}
 T(t,\lambda) = T_J^- (t,\lambda) D_J^+ (\lambda) \hat{S}_J^+ (t,\lambda) =  T_J^+ (t,\lambda) D_J^- (\lambda) \hat{S}_J^- (t,\lambda).
\end{split}\end{equation}
In other words  $S_J^\pm (t,\lambda)$, $T_J^\pm (t,\lambda)$ are the factors in the Gauss decomposition of $T(t,\lambda)$. They can be expressed through the matrix elements of $T(t,\lambda)$ as follows:

\begin{description}

  \item[i)] the factors $S_J^\pm (t,\lambda)$, $T_J^\pm (t,\lambda)$  are related to the scattering matrix as follows:
  \begin{equation}\label{eq:SJpm}\begin{aligned}
  S_J^\pm (t,\lambda) & = \exp \left(\pm \sum_{\alpha \in \Delta_1^+}^{} \tau^\pm_\alpha E_{\pm \alpha} \right), \qquad   T_J^\pm (t,\lambda) & = \exp \left(\pm \sum_{\alpha \in \Delta_1^+}^{} \rho^\mp_\alpha E_{\pm \alpha} \right).
  \end{aligned}\end{equation}
where
 \begin{equation}\label{eq:rho-tau}\begin{split}
  \vec{\rho}^\pm(t,\lambda) = \frac{ \vec{b}^\pm}{m_1^\pm}, \qquad \vec{\tau}^\pm(t,\lambda) =  \frac{ \vec{B}^\pm}{m_1^\pm},
 \end{split}\end{equation}
see Appendix C. 

  \item[ii)] $D_J^\pm (\lambda)$ are block-diagonal functions
  \begin{equation}\label{eq:Dpm}\begin{split}
D_J^+ &= \left( \begin{array}{ccc} m_1^+ & 0 & 0 \\ 0 & {\bf m}_2^+ & 0 \\
0 & 0 & 1/m_1^+ \end{array} \right), \qquad  D_J^- = \left( \begin{array}{ccc} 1/m_1^- & 0 & 0 \\ 0 & {\bf m}_2^- & 0 \\
0 & 0 & m_1^- \end{array} \right),
  \end{split}\end{equation}
  analytic for $\lambda \in \mathbb{C}_\pm$ respectively; in addition
\begin{equation}\label{eq:rotau}\begin{aligned}
\m_2^+ &= \T_{22}+ \frac{\vec{b}^+ \vec{b}^{-,T} }{2m_1^+}  =\hat{ \T}_{22}+ \frac{s_0\vec{b}^- \vec{b}^{+,T}s_0 }{2m_1^+} , \qquad \m_2^- &= \hat{\T}_{22}+ \frac{\vec{B}^+ \vec{B}^{-,T} }{2m_1^-}= \hat{\T}_{22}+ \frac{s_0\vec{B}^- \vec{B}^{+,T}s_0 }{2m_1^-} .
\end{aligned}\end{equation}

\end{description}


\subsection{2.1 The direct and inverse scattering problems for $L$}

Next we remind that solving the  direct and the inverse scattering problem (ISP) for $L$
is reduced to  a Riemann-Hilbert problem (RHP) for the fundamental analytic solution (FAS) $\chi^{\pm} (x,t,\lambda )$. Their construction is based on the generalized Gauss decomposition of $T(\lambda,t)$
\begin{equation}\label{eq:FAS_J}
\chi ^\pm(x,t,\lambda)= \phi (x,t,\lambda) S_{J}^{\pm}(t,\lambda ) = \psi (x,t,\lambda ) T_{J}^{\mp}(t,\lambda ) D_J^\pm (\lambda).
\end{equation}
Here $S_{J}^{\pm} $ and $T_{J}^{\pm} $ are upper- and lower-block-triangular matrices, while $D_J^\pm(\lambda)$ are block-diagonal matrices with the same block structure as $T(\lambda,t)$ above. The explicit expressions of these Gauss factors in terms of the matrix elements of $T(\lambda,t)$ are given above by equations (\ref{eq:SJpm}) and (\ref{eq:rho-tau}).

If $\vec{q}(x,t) $ evolves according to (\ref{eq:KSm}) then the scattering matrix and its elements satisfy the following linear evolution equations
\begin{equation}\label{eq:evol}
i\frac{d\vec{B}^{\pm}}{d t} \pm \lambda ^{2k} \vec{B}^{\pm}(t,\lambda ) =0,  \qquad
i\frac{d\vec{b}^{\pm}}{d t} \pm \lambda ^{2k} \vec{b}^{\pm}(t,\lambda ) =0,  \qquad  i\frac{d m_1^{\pm}}{d t}  =0,
 \qquad  i \frac{d{\bf m}_2^{\pm}}{d t}  =0,
\end{equation}
so the block-diagonal matrices $D^{\pm}(\lambda)$ are  generating functionals of the integrals of motion.
The fact that all $(2r-1)^2$ matrix elements of $m_2^\pm(\lambda)$ for $\lambda \in \bbbc_\pm$  generate integrals of motion reflects
the super-integrability of the model and is due to the degeneracy of the dispersion law determined by $\lambda^{2k} J$. We remind that
$D^\pm_J(\lambda)$ allow analytic extension for $\lambda\in \bbbc_\pm$ and that their zeroes and
 poles determine the discrete eigenvalues of $L$. We will use also another set of FAS:
\begin{equation}\label{eq:chi'}\begin{split}
\chi^{\prime,\pm} (x,t,\lambda) = \chi^\pm (x,t,\lambda) \hat{D}^\pm_J (\lambda).
\end{split}\end{equation}

The FAS for real $\lambda$ are linearly related
\begin{equation}\label{eq:rhp0}\begin{split}
\chi^+(x,t,\lambda) &=\chi^-(x,t,\lambda) G_J(\lambda,t), \qquad G_{0,J}(\lambda,t) =S^-_J(\lambda,t)S^+_J(\lambda,t) , \\
\chi^{\prime,+}(x,t,\lambda) &=\chi^{\prime,-}(x,t,\lambda) G'_J(\lambda,t), \qquad G'_{0,J}(\lambda,t) =T^+_J(\lambda,t)T^-_J(\lambda,t) .
\end{split}\end{equation}
One can rewrite eq. (\ref{eq:rhp0}) in an equivalent form for the FAS
$\xi^\pm(x,t,\lambda)=\chi^\pm (x,t,\lambda)e^{i\lambda Jx }$ and $\xi^{\prime,\pm}(x,t,\lambda)=\chi^{\prime,\pm} (x,t,\lambda)e^{i\lambda Jx }$
which satisfy also the relation
\begin{equation}\label{eq:rh-n}
\lim_{\lambda \to \infty} \xi^\pm(x,t,\lambda) = \openone, \qquad \lim_{\lambda \to \infty} \xi^{\prime,\pm}(x,t,\lambda) = \openone.
\end{equation}
Then for $\im \lambda=0$ these FAS satisfy
\begin{equation}\label{eq:rhp1}\begin{aligned}
\xi^+(x,t,\lambda) &=\xi^-(x,t,\lambda) G_J(x,\lambda,t), &\quad G_{J}(x,\lambda,t) &=e^{-i\lambda
Jx -i \lambda^2 Jt}G_{0,J}(\lambda,t)e^{i\lambda Jx +i \lambda^2 Jt} ,  \\
\xi^{\prime,+}(x,t,\lambda) &=\xi^{\prime, -}(x,t,\lambda) G'_J(x,\lambda,t), &\quad G'_{J}(x,\lambda,t) &=e^{-i\lambda Jx -i \lambda^2 Jt}G'_{0,J}(\lambda,t)e^{i\lambda Jx +i \lambda^2 Jt} .
\end{aligned}\end{equation}
Obviously the sewing function $G_j(x,\lambda,t)$ is uniquely determined by
the Gauss factors of $T(\lambda,t)$. In view of eq. (\ref{eq:SJpm}) we arrive to the following
\begin{lemma}\label{lem:ms}
Let the potential $Q(x,t)$ be such that the Lax operator $L$ has no discrete eigenvalues. Then as minimal set of scattering data
which determines uniquely the scattering matrix $T(\lambda,t)$ and the corresponding potential $Q(x,t)$ one can consider either one
of the sets $\mathfrak{T}_i$, $i=1,2$
\begin{equation}\label{eq:T_i}
\mathfrak{T}_1 \equiv \{ \vec{\rho}^+(\lambda,t), \vec{\rho}^-(\lambda,t),  \quad \lambda^k \in \bbbr\},
\qquad \mathfrak{T}_2 \equiv \{ \vec{\tau}^+(\lambda,t), \vec{\tau}^-(\lambda,t),  \quad \lambda^k \in \bbbr\}.
\end{equation}\end{lemma}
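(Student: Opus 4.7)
The plan is to show that each of the sets $\mathfrak{T}_i$ is both sufficient to reconstruct $T(\lambda,t)$ and $Q(x,t)$, and minimal in that none of its components can be dropped. I will concentrate on $\mathfrak{T}_1$; the argument for $\mathfrak{T}_2$ is parallel by the second equality in (\ref{eq:TGauss}).

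First, I would use (\ref{eq:SJpm}) together with (\ref{eq:rho-tau}) to show that the Gauss factors $S_J^\pm(\lambda,t)$ are recovered from $\vec\rho^\pm(\lambda,t)$ by a purely algebraic (exponential) formula: having $\vec\rho^\pm$ we can write down $S_J^\pm$ directly, and hence the sewing function
\begin{equation*}
G_{0,J}(\lambda,t) = S_J^-(\lambda,t)\, S_J^+(\lambda,t), \qquad \lambda \in \mathbb{R},
\end{equation*}
and its $x$-dressed version $G_J(x,\lambda,t)$ of (\ref{eq:rhp1}).

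Second, I would invoke the RHP (\ref{eq:rhp1}) with canonical normalization $\lim_{\lambda\to\infty}\xi^\pm=\openone$. Under the hypothesis that $L$ has no discrete eigenvalues, the functions $D_J^\pm(\lambda)$ have no zeroes in $\mathbb{C}_\pm$, so the RHP is regular and, by the standard uniqueness theorem for normalized RHPs on the real line with group-valued jump, it has a unique solution $\xi^\pm(x,t,\lambda)$. From the FAS $\chi^\pm = \xi^\pm e^{-i\lambda J x}$ the potential is then read off from the leading asymptotic expansion as $\lambda \to \infty$:
\begin{equation*}
\chi^\pm(x,t,\lambda) = \openone + \lambda^{-1}\chi^\pm_1(x,t) + \bO(\lambda^{-2}), \qquad Q(x,t) = [J,\chi^\pm_1(x,t)],
\end{equation*}
which recovers $Q$ uniquely. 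Once $Q$ is known, the Jost solutions $\phi,\psi$ and hence the full scattering matrix $T(\lambda,t)$ with all entries $\vec b^\pm,\vec B^\pm,\mathbf T_{22},m_1^\pm,c_1^\pm$ are determined; alternatively, $m_1^\pm$ and $\mathbf m_2^\pm$ can be reconstructed directly from the absence of discrete spectrum together with standard dispersion-relation formulas for $\ln m_1^\pm$ in terms of $\vec\rho^\pm$, after which (\ref{eq:rho-tau}) gives $\vec b^\pm$, and the remaining entries follow from (\ref{eq:rotau}) and the $c_1^\pm$ identity under the $T(\lambda,t)$.

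For minimality, I would observe that $\vec\rho^+$ and $\vec\rho^-$ are independent data: $\vec\rho^+$ alone determines only the upper-triangular factor $S_J^+$ and thus only the ``$+$'' half of the jump, which is insufficient to close the RHP; symmetrically for $\vec\rho^-$. Hence neither half of $\mathfrak{T}_1$ can be omitted. The main obstacle I expect is the regularity/unique solvability of the RHP: one must verify that the jump $G_J(x,\lambda,t)$ arising from generic $\vec\rho^\pm\in L^1\cap L^\infty(\mathbb{R})$ does indeed admit a canonical (no-pole) factorization, which is exactly where the hypothesis of no discrete eigenvalues is used, together with the fact that the reductions inherent in the $SO(2r+1)$ structure guarantee $m_1^\pm(\lambda)\neq 0$ on $\mathbb{R}$. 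Everything else is algebraic manipulation of Gauss factors and a standard dressing-type reconstruction of $Q$.
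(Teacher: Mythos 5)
Your proposal follows essentially the same two-step strategy as the paper's proof: (a) reconstruct the sewing function algebraically from the Gauss factors determined by the data, (b) solve the canonically normalized regular RHP --- unique solvability coming from the absence of discrete eigenvalues --- and read off $Q(x,t)$ from the $\lambda\to\infty$ asymptotics as in (\ref{eq:QQ}); the remaining entries of $T(\lambda,t)$ are then recovered from the analyticity of $m_1^\pm$ and ${\bf m}_2^\pm$. The paper makes the last step explicit through the $SO(2r+1)$ identity (\ref{eq:25.3}) for $1/(m_1^+m_1^-)$ plus Cauchy--Plemelj, which is the ``standard dispersion relation'' you only allude to. One concrete correction is needed: by (\ref{eq:SJpm}) and Appendix C the vectors $\vec{\rho}^\pm$ parametrize the triangular factors $T_J^\mp$, not $S_J^\pm$ (the latter are built from $\vec{\tau}^\pm$). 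Hence $\mathfrak{T}_1$ determines the jump $G'_{0,J}=T_J^+T_J^-$ of the \emph{primed} RHP for $\chi^{\prime,\pm}$ in (\ref{eq:rhp0}), while it is $\mathfrak{T}_2$ that yields $G_{0,J}=S_J^-S_J^+$. This is only a relabelling --- your argument goes through verbatim once the primed FAS are used for $\mathfrak{T}_1$ --- but as written your first step assembles the wrong sewing function from the given data. Your closing minimality remark (each of $\vec{\rho}^+$, $\vec{\rho}^-$ controls only one triangular factor of the jump, so neither can be dropped) is not in the paper but is a sensible supplement, as is your observation that the no-discrete-eigenvalues hypothesis is exactly what guarantees the canonical (pole-free) factorization.
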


\begin{proof} i) From the fact that $T(\lambda,t)\in SO(2r+1)$ one can derive that
\begin{equation}\label{eq:25.3}
\frac{1}{m_1^+m_1^-} = 1 + (\vec{\rho^+},\vec{\rho^-}) + \frac{1}{4}
(\vec{\rho^+},s_0\vec{\rho^+}) (\vec{\rho^-}, s_0\vec{\rho^-})
\end{equation}
for $\lambda\in\bbbr$. Using the analyticity properties of $m_1^\pm $  we can recover them from eq. (\ref{eq:25.3})
using Cauchy-Plemelji formulae. Given $\mathfrak{T}_i$ and $m_1^\pm$ one easily recovers $\vec{b}^\pm(\lambda)$
and $c_1^\pm(\lambda)$. In order to recover ${\bf m}_2^\pm$ one again uses their analyticity properties,
 only now the problem reduces to a RHP for functions on $SO(2r+1)$. The details will be presented elsewhere.

ii) Given $\mathfrak{T}_i$ one uniquely recovers the sewing function $G_J(x,t,\lambda)$.
In order to recover the corresponding potential $Q(x,t)$ one can use the fact that the RHP (\ref{eq:rhp1})
with canonical normalization has unique regular solution $\chi^\pm(x,t,\lambda)$. Given  $\chi^\pm(x,t,\lambda)$ we recovers $Q(x,t)$ via:
\begin{equation}\label{eq:QQ}
 Q(x,t) = \lim_{\lambda\to\infty} \lambda \left( J - \chi^\pm J \widehat{\chi}^\pm(x,t,\lambda)\right) =
 \lim_{\lambda\to\infty} \lambda \left( J - \chi^{\prime,\pm} J \widehat{\chi}^{\prime,\pm}(x,t,\lambda)\right)..
\end{equation}
which is well known.
\end{proof}
As a consequence of the standard reduction  $Q(x,t)=Q^\dag(x,t)$ (or in components $p_k=q_k^*$) we have $\vec{\rho}^-(\lambda,t)=\vec{\rho}^{+,*}(\lambda,t)$ and $\vec{\tau}^-(\lambda,t)=\vec{\tau}^{+,*}(\lambda,t)$.

For the sake of completeness we also write down the Hamiltonian of the generalized KSM (\ref{eq:KSm}) for vanishing boundary conditions:
\begin{equation}\label{eq:HKSm0}\begin{split}
H_{\rm VBC} = \int_{-\infty}^{\infty} dx\; \left( \left( \frac{\partial \vec{q}^\dag }{ \partial x}, \frac{\partial \vec{q} }{ \partial x } \right) - (\vec{q}^\dag, \vec{q})^2 + \frac{1}{2} (\vec{q}^T
s_0 \vec{q}) (\vec{q}^\dag s_0 \vec{q}^*) \right).
\end{split}\end{equation}

\section{3. Kulish-Sklyanin model for constant boundary conditions}

Our aim will be to analyze the properties of the KSM for constant boundary conditions (CBC). Typically CBC for NLS type equations are related with substantial changes on the for and spectrum of the Lax pair. The first of these changes is the reduction on the potential $Q(x,t)$ which now reads $Q^\dag (x,t) = - Q(x,t)$, i.e. $\vec{p} = - \vec{q}^*$. Another obvious fact is that we can not use as Hamiltonian (\ref{eq:HKSm0}) even if we take into account the new reduction and change the signs correspondingly. Indeed, let the vector $\vec{q}(x,t)$ and the potential $Q(x,t)$ for $x\to\pm\infty$ tend fast enough to
\begin{equation}\label{eq:Qpm}\begin{split}
 \lim_{x\to\pm\infty} Q(x,t) = Q_\pm , \qquad Q_\pm = \left(\begin{array}{ccc} 0 & \vec{q}_\pm^T & 0 \\ \vec{p}_\pm & 0 & s_0 \vec{q}_\pm \\ 0 & \vec{p}_\pm^Ts_0 & 0 \end{array}\right).
\end{split}\end{equation}
Then the integrand in (\ref{eq:HKSm0}) will also tend to a constant for $x\to \pm \infty$ and the integral will be divergent. Therefore the Hamiltonian, the corresponding KSM and also the $M$ operator will have to be modified in order to have coherent theory.

We will need also:
\begin{equation}\label{eq:Upm}\begin{split}
 U_\pm (\lambda) = Q_\pm - \lambda J, \qquad L_\pm \psi_\pm = i \frac{\partial \psi_\pm}{ \partial x } + U_\pm (\lambda) \psi_\pm (x,t,\lambda) =0.
\end{split}\end{equation}
We remind also that we will consider mainly the KSM, i.e. the case of 3-component vectors $\vec{q}$ and $\vec{p}$. From physical point of view it describes the spin-1 BEC. The cases when these vectors have 5 (spin-2 BEC) or more component lead to similar results for the spectrum of $L_\pm$.
It is well known that the integrability properties are preserved provided the asymptotic operators $L_\pm$ have the same spectrum. In particular this means that $U_\pm $ have the same sets of eigenvalues and the same characteristic polynomials.

\subsection{3.1 The asymptotic operators $L_\pm$}

The characteristic equation for  $L_\pm =Q_\pm -\lambda J$ takes the form:
\begin{equation} \label{curve}
\det(Q_\pm -\lambda J-z \openone)=-z(z^4-(\lambda^2+2a_2)z^2+a_4)=0,
\end{equation}
where
\begin{align*}
&a_2=p_{1,\pm} q_{1,\pm}+p_{2,\pm}q_{2,\pm}+p_{3,\pm}q_{3,\pm} = (\vec{p}_\pm \vec{q}_\pm), \qquad b_2^2 = (\vec{q}_\pm^T s_0 \vec{q}_\pm) (\vec{p}_\pm^T s_0 \vec{p}_\pm) ,\qquad
a_4=  a_2^2 -b_2^2 .
\end{align*}

The solution of the equation \eqref{curve} is given by:
\begin{equation*}
z=0,\quad z^2=\dfrac{\lambda^2+2a_2\pm\sqrt{D_1}}2, \qquad D_1=(\lambda^2+2a_2)^2-4a_4= \lambda^4 +4a_2\lambda^2 +4b_2^2.
\end{equation*}
 The roots of $D_1=0$ are $\lambda^2=-2a_2\pm2\sqrt{a_4}$.
Neglecting the root $z\ne 0$ equation \eqref{curve} takes the form:
\begin{equation*}
(\lambda z)^2=z^4-2a_2z^2+a_4=(z^2-a_2)^2+a_4-a_2^2.
\end{equation*}
The parametrization of the eigenvalues depends on the values of  the constants. In particular, if $0<a_4<a_2^2$, $a_2>0$ then we can introduce the notations:
\begin{gather*}
A_1^2 =a_2 - b_2, \qquad k^2 =\frac{ a_2 - b_2}{a_2 + b_2},\qquad z_{1,5} = \pm A_1\sn(u|k).
\end{gather*}
Then $\lambda$ can be expressed in terms of Jacobi elliptic functions, see  Appendix D. 
\begin{equation} \label{lambda.u}
\lambda=\pm \dfrac{A_1\cn(u|k)\dn(u|k)}{k\sn(u|k)}.
\end{equation}

Inserting \eqref{lambda.u} into \eqref{curve} and simplifying, we find for the roots:
\begin{equation}\label{eq:z24}
z_{2,4}=\mp\dfrac{A_1}{ k \sn(u|k)}.
\end{equation}
\begin{remark}\label{rem:2}
Note that if $\im z_1 >0$, then $\im z_2 >0$, which has to be taken into account when ordering the eigenvalues of $\mathcal{J}$.
\end{remark}

Thus to each value of the spectral parameter $\lambda$ given by \eqref{lambda.u} there correspond four different eigenvalues:
\begin{equation} \label{z.u}
z_{1,5}=\pm A_1\sn(u|k),\quad z_{2,4}=\mp\dfrac{A_1}{k \sn(u|k)}.
\end{equation}

Let us now calculate the eigenvectors of $L_\pm$. First we note that the eigenvector corresponding to the eigenvalue  $z_3=0$ takes the form:
\begin{equation*}
\Psi_3^t=(0,\; p_{1,\pm}q_{2,\pm}+p_{2,\pm}q_{3,\pm},\; p_{3,\pm}q_{3,\pm}-p_{1,\pm} q_{1,\pm}, \; -p_{2,\pm}q_{1,\pm}-p_{3,\pm}q_{2,\pm},\; 0)^T.
\end{equation*}
The  eigenvector corresponding to the eigenvalue $z_j$ is given by:
\begin{equation*}
\Psi_j= c_j\begin{pmatrix}(2q_{1,\pm}q_{3,\pm}-q_{2,\pm}^2)z_j\\ q_{3,\pm}z_j^2+q_{3,\pm} \lambda z_j+(p_{1,\pm}q_{1,\pm}-p_{2,\pm}q_{2,\pm}-p_{3,\pm}q_{3,\pm})q_{3,\pm}-p_{1,\pm}q_{2,\pm}^2\\
-q_{2,\pm}z_j^2-q_{2,\pm}\lambda z_j+(p_{1,\pm}q_{1,\pm}+p_{3,\pm}q_{3,\pm})q_{2,\pm} +2p_{2,\pm}q_{1,\pm}q_{3,\pm}\\
q_{1,\pm}z_j^2+q_{1,\pm}\lambda z_j+(p_{3,\pm}q_{3,\pm} -p_{2,\pm}q_{2,\pm}- p_{1,\pm}q_{1,\pm})q_{1,\pm}-p_{3,\pm}q_{2,\pm}^2\\
z_j^3+\lambda z_j^2-(p_{1,\pm}q_{1,\pm}+p_{2,\pm}q_{2,\pm}+p_{3,\pm}q_{3,\pm})z_j
\end{pmatrix}, \qquad L_\pm \Psi_j = z_j \Psi_j
\end{equation*}
where $c_j$ are norming constants and $\lambda$ and $z_j$ are as in \eqref{lambda.u} and \eqref{z.u}. We will need also the `left` eigenvectors of $L_\pm$:
\begin{equation}\label{eq:Psij}\begin{split}
\langle \Psi_j |( L_\pm +z_j)=0, \qquad \langle \Psi_j | = \Psi_j^T S_0.
\end{split}\end{equation}

Besides we need also to normalize correctly the eigenvectors so that the matrices:
\begin{equation}\label{eq:psi}\begin{split}
\Psi = \left(  \Psi(z_1), \Psi(z_2), \Psi(0), \Psi(-z_2), \Psi(-z_1) \right), \qquad \widehat{\Psi} = \left(\begin{array}{c} \langle \Psi(-z_1) | \\ \langle \Psi(-z_2) | \\ \langle \Psi(0) | \\ \langle \Psi(z_2) | \\ \langle \Psi(z_1) |  \end{array}\right)
\end{split}\end{equation}
take values in the group $SO(5)$, i.e. $\Psi S_0 \Psi^T S_0 =\openone_5$ and $\Psi \widehat{\Psi} =\openone_5 $. In addition we order the eigenvalues as follows: $z_1, z_2 , 0, z_4, z_5 $ so that  $z_5 =-z_1$, $z_4 =-z_2$. Introduce the notations:
\begin{equation}\label{eq:w0}\begin{split}
 w_{0,\pm} = \Psi_\pm, \qquad \widehat{w}_{0,\pm} = \widehat{\Psi}_\pm,
 \end{split}\end{equation}
where $\Psi_\pm$, $\widehat{\Psi}_\pm$ obtained from $\Psi$, $\widehat{\Psi}$ by replacing $q_j, p_j$ with $q_{j,\pm}, p_{j,\pm}$. This gives:
\begin{equation}\label{eq:w0pm}\begin{split}
w_{0,\pm}^{-1} (Q_\pm - \lambda J) w_{0,\pm} = -\mathcal{J}(u), \qquad \mathcal{J}(u) = \diag (z_1, z_2, 0, -z_2, -z_1).
\end{split}\end{equation}

\subsection{3.2 The Jost solutions }
Now we can proceed to define the Jost solutions of the Lax operator. To this end we first rewrite the Lax operator into more convenient form:
\begin{equation}\label{eq:Ltil}\begin{split}
 \tilde{L} \tilde{\psi} \equiv i \frac{\partial \tilde{\psi}}{ \partial x } + \tilde{Q}_\pm (x,t,u) \tilde{\psi}(x,t,u) - \mathcal{J}(u)  \tilde{\psi}(x,t,u)=0, \qquad \tilde{Q}_\pm (x,t,u) = w_{0,\pm}^{-1} (Q(x,t) - Q_\pm)  w_{0,\pm}.
\end{split}\end{equation}
Here we have transformed the asymptotic operators $L_\pm$ into diagonal form and also replaced the spectral parameter $\lambda$ by the uniformization variable $u$. The Jost solutions of the operator $\tilde{L}$ are defined by:
\begin{equation}\label{eq:psitil}\begin{split}
\lim_{x\to \infty} \tilde{\psi}(x,t,u) e^{i \mathcal{J}(u) x} = \openone, \qquad \lim_{x\to -\infty} \tilde{\phi}(x,t,u) e^{i \mathcal{J}(u) x} = \openone.
\end{split}\end{equation}

The Jost solutions satisfy the following integral equations:
\begin{equation}\label{eq:JoInt}\begin{split}
\tilde{\Psi}(x,u) & = \openone + i \int_{\infty}^{x} dy \; e^{-i \mathcal{J}(x-y)} \tilde{Q}_+(y,u) \tilde{\Psi}(y,u)  e^{i \mathcal{J}(x-y)} , \\
\tilde{\Phi}(x,u) & = \openone + i \int_{-\infty}^{x} dy \; e^{-i \mathcal{J}(x-y)} \tilde{Q}_-(y,u) \tilde{\Phi}(y,u)  e^{i \mathcal{J}(x-y)} ,
\end{split}\end{equation}
where $\tilde{\Psi}(x,u) = \tilde{\psi}(x,u) e^{i \mathcal{J}x}$, $\tilde{\Phi}(x,u) = \tilde{\phi}(x,u) e^{i \mathcal{J}x}$. In what follows we will need the integral equations (\ref{eq:JoInt}) in components:
\begin{equation}\label{eq:JoIntC}\begin{split}
\tilde{\Psi}_{jk}(x,u) & = \delta_{jk} + i \int_{\infty}^{x} dy \;  \left( \tilde{Q}_+(y,u) \tilde{\Psi}(y,u)\right)_{jk}  e^{-i (z_j - z_k)(x-y)} , \\
\tilde{\Phi}_{jk}(x,u) & = \delta_{jk} + i \int_{-\infty}^{x} dy \; \left( \tilde{Q}_-(y,u) \tilde{\Psi}(y,u)\right)_{jk}  e^{-i (z_j - z_k)(x-y)} ,
\end{split}\end{equation}
where we remind that $z_j(u)$ are the eigenvalues of $\mathcal{J}$. It will be easier to use equations (\ref{eq:JoIntC}) in discussing the analyticity properties of fundamental solutions of $\tilde{L}$.

Next we introduce the scattering matrix of the Lax operator and its inverse by:
\begin{equation}\label{eq:T}\begin{aligned}
T(t,u) &= \tilde{\psi}(x,u)^{-1} \tilde{\phi}(x,u), &\qquad T^{-1}(t,u) &= \tilde{\phi}(x,u)^{-1} \tilde{\psi}(x,u), \\
T(t,u) &= \left( \begin{array}{ccc} m_1^+ & -\vec{B}^-{}^T & c_1^- \\
\vec{b}^+ & {\bf T}_{22} & -s_0\vec{b}^- \\ c_1^+ & \vec{B}^+{}^Ts_0 & m_1^- \\
\end{array}\right), &\qquad
T^{-1}(t,u) &= \left( \begin{array}{ccc} m_1^- & -\vec{b}^-{}^T & c_1^- \\
-\vec{B}^+ & s_0{\bf T}_{22}s_0 & s_0\vec{B}^- \\ c_1^+ & -\vec{b}^+{}^Ts_0 & m_1^+ \\
\end{array}\right),
\end{aligned}\end{equation}

\begin{remark}\label{rem:1}
In what follows we will impose the following constraints on the potential $Q(x,t)$:
\begin{description}
  \item[i)] The potential $Q(x)$ tends fast enough to its limits, i.e.:
  \begin{equation}\label{eq:Q}\begin{split}
   \lim_{x\to \infty} x^p (Q(x,t) - Q_+) =0, \qquad \lim_{x\to -\infty} x^p (Q(x,t) - Q_-) =0, \qquad p =1,2,3, \dots;
  \end{split}\end{equation}

  \item[ii)] The potential $Q(x)$ is such that the operator $L$ has no discrete eigenvalues. Below we shall show that the discrete eigenvalues of $L$ are related to the zeroes of the principal minors of the scattering matrix $T(t,u)$.

  \item[iii)] Remember that $z_5 =- z_1$, $z_4 =-z_2$ and $z_3=0$.

\end{description}
\end{remark}

Let us now assume that $u$ is such that all eigenvalue $z_j(u)$ as well as $z_1(u) - z_2(u)$ are real. Then the exponential factors in the integral equations (\ref{eq:JoIntC}) will be oscillating and condition i) will ensure both the convergence of the integrals in the right hand sides of eqs. (\ref{eq:JoIntC}) and the existence of the Jost solutions. In fact all $u$ for which one can define the Jost solutions will determine the continuous spectrum of $L$.

\subsection{3.3 The FAS of $L$}

Let us now analyze the analyticity properties of the solutions to the integral equations (\ref{eq:JoIntC}) and let us assume that we have separated the regions $\mathcal{A}^+$ and $\mathcal{A}^-$ in the fundamental domain in which we have:
\begin{equation}\label{eq:ReA}\begin{split}
 \mathcal{A}^+ \quad \colon \quad \im z_1(u) > \im z_2(u) >0, \qquad  \mathcal{A}^- \quad \colon \quad -\im z_1(u) > -\im z_2(u) >0.
\end{split}\end{equation}

In the region $ \mathcal{A}^+$ (resp.  $\mathcal{A}^-$) all exponential factors $e^{-i(z_1 - z_k)(x-y)}$ in the first of the equations (\ref{eq:JoIntC}) will decrease exponentially. This is due to the condition (\ref{eq:ReA}) and to the fact that the integration is over the interval $[\infty, x]$ where $x-y <0$. These facts ensure that the first column of $\tilde{\Psi}(x,u)$ is analytic for $u \in \mathcal{A}$. Similarly we can consider the last column of $\tilde{\Phi}(x,u)$. In this case the exponential factors $e^{-i(z_5 - z_k)(x-y)}$ in the second of the equations (\ref{eq:JoIntC}) will also decrease exponentially taking into account iii) and the fact that  the integration is over the interval $(-\infty , x]$ in which $(x-y)$ is positive.

However, the other columns of $\tilde{\Psi}(x,u)$ and $\tilde{\Phi}(x,u)$ {\em will not } have analyticity properties for $u \in \mathcal{A}$, because some of the exponential factors will be increasing and the equations (\ref{eq:JoIntC}) will not have solutions for these values of $u$.

Nevertheless we are able to construct the fundamental analytic solution (FAS) of $\tilde{L}$ in the region $u\in \mathcal{A}^+$ using the idea of Shabat \cite{Sh*75, Sh*79, ZMNP}. Indeed, let us introduce $\chi_{A}^+$ and $\chi_{A'}^+$  as the solutions of the integral equations:
\begin{equation}\label{eq:JoIntC2}\begin{aligned}
\chi^+_{A;jk}(x,u) & = \delta_{jk} + i \int_{\infty}^{x} dy \;  \left( \tilde{Q}_+(y,u) \chi^+_{A}(y,u)\right)_{jk}  e^{-i (z_j - z_k)(x-y)} ,  &\qquad \mbox{for} \quad j< k\\
\chi^+_{A;jk}(x,u) & = \delta_{jk} + i \int_{-\infty}^{x} dy \; \left( \tilde{Q}_-(y,u) \chi^+_{A}(y,u)\right)_{jk}  e^{-i (z_j - z_k)(x-y)} , &\qquad \mbox{for} \quad j\geq k, \\
\chi^+_{A';jk}(x,u) & = \delta_{jk} + i \int_{\infty}^{x} dy \;  \left( \tilde{Q}_+(y,u) \chi^+_{A'}(y,u)\right)_{jk}  e^{-i (z_j - z_k)(x-y)} ,  &\qquad \mbox{for} \quad j\leq k\\
\chi^+_{A';jk}(x,u) & = \delta_{jk} + i \int_{-\infty}^{x} dy \; \left( \tilde{Q}_-(y,u) \chi^+_{A'}(y,u)\right)_{jk}  e^{-i (z_j - z_k)(x-y)} , &\qquad \mbox{for} \quad j> k,
\end{aligned}\end{equation}
Similarly we can write dow the `dual` equations for $\chi^-_{A;jk}(x,u)$ in the form:
\begin{equation}\label{eq:JoIntC3}\begin{aligned}
\chi^-_{A;jk}(x,u) & = \delta_{jk} + i \int_{\infty}^{x} dy \;  \left( \tilde{Q}_+(y,u) \chi^-_{A}(y,u)\right)_{jk}  e^{-i (z_j - z_k)(x-y)} ,  &\qquad \mbox{for} \quad j> k\\
\chi^-_{A;jk}(x,u) & = \delta_{jk} + i \int_{-\infty}^{x} dy \; \left( \tilde{Q}_-(y,u) \chi^-_{A}(y,u)\right)_{jk}  e^{-i (z_j - z_k)(x-y)} , &\qquad \mbox{for} \quad j\leq k, \\
\chi^-_{A';jk}(x,u) & = \delta_{jk} + i \int_{\infty}^{x} dy \;  \left( \tilde{Q}_+(y,u) \chi^-_{A'}(y,u)\right)_{jk}  e^{-i (z_j - z_k)(x-y)} ,  &\qquad \mbox{for} \quad j\geq k\\
\chi^-_{A';jk}(x,u) & = \delta_{jk} + i \int_{-\infty}^{x} dy \; \left( \tilde{Q}_-(y,u) \chi^-_{A'}(y,u)\right)_{jk}  e^{-i (z_j - z_k)(x-y)} , &\qquad \mbox{for} \quad j< k,
\end{aligned}\end{equation}

\begin{theorem}\label{thm:1}
 Let us assume that the potential $Q(x,t)$ satisfies the conditions in Remark \ref{rem:1}. Then the solutions $\chi_{A}^\pm(x,u)$ and $\chi_{A'}^\pm(x,u)$ are FAS of the operator $\tilde{L}$ for $u \in \mathcal{A}^\pm$ respectively.
\end{theorem}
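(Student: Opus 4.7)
The plan is to handle the four solutions $\chi^\pm_A$, $\chi^\pm_{A'}$ uniformly by proving three things in order: first, that for $u$ in the open domain $\mathcal{A}^+$ (resp.\ $\mathcal{A}^-$) every exponential kernel appearing in (\ref{eq:JoIntC2}) (resp.\ (\ref{eq:JoIntC3})) is non-increasing along its integration ray; second, that Neumann iteration then produces a bounded solution depending analytically on $u$; third, that the resulting matrix is a genuine fundamental solution of $\tilde{L}$.

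For step one, focus on $\chi^+_A$ on $\mathcal{A}^+$. By (\ref{eq:ReA}) and item iii) of Remark~\ref{rem:1}, the eigenvalues satisfy $\im z_1 > \im z_2 > 0 > \im z_4 > \im z_5$. If $j<k$ then $\im(z_j-z_k)>0$ strictly and the integration in (\ref{eq:JoIntC2}) runs from $\infty$ to $x$, so $x-y\leq 0$; consequently $|e^{-i(z_j-z_k)(x-y)}|=e^{\im(z_j-z_k)(x-y)}\leq 1$, and the factor decays exponentially as $y\to\infty$. For $j\geq k$ the signs of both $\im(z_j-z_k)$ and $x-y$ flip, again yielding a bounded kernel that decays as $y\to -\infty$. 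Combined with hypothesis i) of Remark~\ref{rem:1}, which makes $\tilde Q_\pm$ integrable on the relevant half-lines, every term of the Neumann series is dominated by an iterated convolution of $L^1$-norms of $\tilde Q_\pm$; a standard factorial estimate yields absolute and uniform convergence on compact subsets of $\bbR\times \mathcal{A}^+$ and a unique bounded solution $\chi^+_A(x,u)$. Because $z_j(u)$ and $w_{0,\pm}(u)$ depend analytically on the uniformising variable throughout $\mathcal{A}^+$ (see (\ref{z.u})--(\ref{eq:w0})), each iterate and hence the limit is analytic on $\mathcal{A}^+$. The treatments of $\chi^+_{A'}$, $\chi^-_A$, $\chi^-_{A'}$ are identical after interchanging the roles of $\leq$ and $<$ in the range of $j,k$ and/or flipping the sign of $\im z_j$.

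The final step is to verify that $\chi^+_A(x,u)$ solves $\tilde L\chi=0$ and is invertible. Differentiation under the integral sign, combined with the continuity at $y=x$ supplied by the $\delta_{jk}$ term, gives $\tilde L\chi^+_A=0$ immediately. Invertibility is the delicate point, because the columns of $\chi^+_A$ are normalised at different infinities, so $\det \chi^+_A$ cannot be read off from a single boundary value. The clean route is a Wronskian argument: since the transformed potential $\tilde Q_\pm$ has zero diagonal, any matrix solution $\Xi$ of $\tilde L\Xi=0$ satisfies $\partial_x\det\Xi=i\tr(\tilde Q_\pm)\det\Xi=0$, so $\det\chi^+_A$ is $x$-independent. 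The triangular structure of (\ref{eq:JoIntC2}) then forces $\chi^+_A$ to tend to upper-triangular (resp.\ lower-triangular) limits with $1$'s on the diagonal as $x\to+\infty$ (resp.\ $x\to-\infty$); either asymptotics yields $\det\chi^+_A\equiv 1$ and hence the FAS property. The main obstacle I anticipate is precisely this juggling of mixed boundary conditions to pin down the triangular asymptotics cleanly; the analogous behaviour on the boundary curves of $\mathcal{A}^\pm$, where some $\im(z_j-z_k)$ vanishes, is more delicate still, but is not needed for analyticity inside the open region and only becomes relevant when these curves are used to formulate the RHP in the next subsection.
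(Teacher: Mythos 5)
Your argument is correct and rests on the same core observation as the paper's own (very brief) ``idea of the proof'': in $\mathcal{A}^\pm$ the ordering $\im z_1>\im z_2>0>\im z_4>\im z_5$ makes every kernel $e^{-i(z_j-z_k)(x-y)}$ in (\ref{eq:JoIntC2})--(\ref{eq:JoIntC3}) oscillating or decaying along its integration ray, with condition i) of Remark~\ref{rem:1} handling the diagonal terms. You go further than the paper by spelling out the Neumann-series convergence, the analyticity in $u$, and the constancy of the determinant; the only slip is that the roles of $x\to\pm\infty$ in your triangular asymptotics are interchanged (cf.\ (\ref{eq:chiApm}): $\chi^+_A\to S^+_A$, upper triangular with unit diagonal, as $x\to-\infty$), which does not affect the conclusion $\det\chi^+_A\equiv 1$.
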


\begin{proof}[Idea of the proof] It is easy to check that all exponential factors in the integral equations (\ref{eq:JoIntC2}) are either oscillating (for $j=k$) or exponentially decreasing. Therefore all integrals in the right hand sides of (\ref{eq:JoIntC2}) will be convergent. For $k=j$ this is ensured by condition i); for $k\neq j$ this is ensured by exponential factors.

\end{proof}

It is well known that any two fundamental solutions of the same operator are linearly related. In other words $\chi_{A}(y,u)$ is related to both Jost solutions. In order to find out the structure of the relevant factors we need to calculate the limits $\chi_{A}(y,u)$ for $x\to \pm \infty$. Skipping the details we find:
\begin{equation}\label{eq:chiApm}\begin{aligned}
 \chi_{A}^\pm(y,u) & = \tilde{\Phi}(x,u) S_A^\pm(t,u) , &\qquad  \chi_{A}^\pm(y,u) & = \tilde{\Psi}(x,u) T^\mp_A(t,u) D_A^\pm(u), \\
 \chi_{A'}^\pm(y,u) & = \tilde{\Phi}(x,u) S_A^\pm(t,u)\hat{D}_A^\pm(u) , &\qquad  \chi_{A'}^\pm(y,u) & = \tilde{\Psi}(x,u) T^\mp_A(t,u) ,
\end{aligned}\end{equation}
where by `hat` we have denoted the inverse matrix, i.e. $\hat{D}\equiv D^{-1}$.
Evaluating the limits of $\chi_{A}$ and $\chi^\pm_{A'}$ for $x\to \pm \infty$ we find that the factors $S_A^\pm$ (resp. $T_A^\pm$) must be upper-triangular (resp. lower-triangular) matrices with 1 on the diagonal, while $D^\pm_A$ must be a diagonal matrix. These matrices must be related to the scattering matrix by:
\begin{equation}\label{eq:Ttu}\begin{aligned}
 T(t,u) = T_A^- D_A^+ \hat{S}_A^+ = T_A^+ D_A^- \hat{S}_A^- .
\end{aligned}\end{equation}
As for the VBC case we have to apply the Gauss decomposition in order to express the Gauss factors $T_A^\pm$, $D_A^\pm$ and $S_A^\pm$ in terms of the matrix elements of $T(t,u)$, see  Appendix B. 

The analyticity regions of the FAS constructed above however cover only half of the fundamental rectangle. Along with them we have to consider FAS for which the ordering of $\im z_j$ is different, namely:
\begin{equation}\label{eq:ReB}\begin{split}
 \mathcal{B}^+ \quad \colon \quad \im z_2(u) > \im z_1(u) >0, \qquad  \mathcal{B}^- \quad \colon \quad -\im z_2(u) > -\im z_1(u) >0.
\end{split}\end{equation}
These cases can be treated quite analogously to the previous ones. Indeed, we can apply to the operator $L$ similarity transformation with $S_{\alpha_1}$ where:
\begin{equation}\label{eq:Salf}\begin{split}
 S_{\alpha_1} = \left(\begin{array}{ccccc} 0 & 1 & 0 & 0 & 0 \\  -1 & 0 & 0 & 0 & 0 \\  0 & 0 & \openone & 0 & 0 \\  0 & 0 & 0 & 0 & 1 \\ 0 & 0 & 0 & -1 & 0  \end{array}\right)
\end{split}\end{equation}
which corresponds to the Weyl reflection with respect to the root $e_1$. Applying this reflection to $\mathcal{J}$ we obtain $S_{\alpha_1} \mathcal{J} = \diag (z_2, z_1, 0 -z_1, -z_2)$. Then we can construct the FAS for the transformed operator $S_{\alpha_1}(L)$ whose scattering matrix will be
$T'(t,u)=S_{\alpha_1} (T(t,u))$. In order to get the corresponding FAS for our initial operator $L$ we need to apply the inverse transformation. But it is easy to check that $S_{\alpha_1}^2 = \openone$, so $S_{\alpha_1} = S_{\alpha_1}^{-1}$. The final result is that we have two more pairs of FAS  $\chi_{B}^\pm(y,u)$ and $\chi_{B'}^\pm(y,u)$ related to the Gauss decompositions of $T'(t,u)$ as follows:
\begin{equation}\label{eq:chiBpm}\begin{aligned}
 \chi_{B}^\pm(y,u) & = \tilde{\Phi}(x,u) S_B^\pm(t,u) , &\qquad  \chi_{B}^\pm(y,u) & = \tilde{\Psi}(x,u) T^\mp_B(t,u) D_B^\pm(u), \\
 \chi_{B'}^\pm(y,u) & = \tilde{\Phi}(x,u) S_B^\pm(t,u)\hat{D}_B^\pm(u) , &\qquad  \chi_{B'}^\pm(y,u) & = \tilde{\Psi}(x,u) T^\mp_B(t,u) ,
\end{aligned}\end{equation}
where $S_B^\pm(t,u) $, $D_B^\pm(u) $ and $T_B^\pm(t,u) $ are constructed from the scattering matrix $T(t,u)$ as follows: first we apply the Weyl reflection to $T(t,u)$ and find $T'(t,u)$. Then we construct the Gauss decompositions of $T'(t,u)$:
\begin{equation}\label{eq:Ttu'}\begin{aligned}
T'(t,u) = T^{\prime, -} D^{\prime,+} \hat{S}^{\prime,+} = T^{\prime, +} D^{\prime,-} \hat{S}^{\prime,-}.
\end{aligned}\end{equation}
Next we apply the inverse transformation to the Gauss factors and obtain the factors:
\begin{equation}\label{eq:TSB}\begin{split}
T_B^\pm(t,u) = S_{\alpha_1}(T^{\prime, \pm}(t,u)), \quad D_B^\pm(u) = S_{\alpha_1}(D^{\prime, \pm}(u)), \quad S_B^\pm(t,u) = S_{\alpha_1}(S^{\prime, \pm} (t,u)),
\end{split}\end{equation}
that relate the new FAS to the Jost solutions, see eq. (\ref{eq:chiBpm}).

These construction, unlike for the VBC require that we solve for generic  Gauss factors of the scattering matrix $T(t,u)$ or the transformed one  $T'(t,u)$. These formulae have been known for long time now, see the Appendix or \cite{ContMat} and the references therein.

\subsection{3.4 The $t$-dependence of the scattering matrix}

In order that the Lax pair remains compatible with the CBC we need to modify the $M$-operator.
We take in the form:
\begin{equation}\label{eq:UV2}\begin{aligned}
M\psi \equiv i \frac{\partial \psi}{ \partial t } + V^{(1)}(x,t,\lambda)\psi (x,t,\lambda) &= 0, \qquad V^{(1)}(x,t,\lambda) = i Q_{1,x}+ V_2(x,t) - V_{2,\pm} +\lambda Q(x,t) -\lambda^2 J,
\end{aligned}\end{equation}
where the additional term is $V_{2,\pm} = \lim_{x\to\pm\infty} V_2(x,t)$.

In order to derive the $t$-dependence of the scattering matrix $T(t,\lambda)$ we need to consider the asymptotic operators $M_\pm = i \frac{\partial }{ \partial t} + V_\pm (\lambda)$ which become:
\begin{equation}\label{eq:Mpm}\begin{split}
M_\pm \psi_\pm \equiv i \frac{\partial \psi_\pm }{ \partial t } + (\lambda Q_\pm - \lambda^2 J)\psi_\pm = 0,
\end{split}\end{equation}
As one could expect the asymptotic operators $M_\pm$ commute with $L_\pm$; so they are diagonalized by the same matrices $w_{0,\pm}$ introduced above. As a result we find that:
\begin{equation}\label{eq:Tt}\begin{split}
i \frac{\partial T}{ \partial t } - [ \lambda \mathcal{J}, T(\lambda,t)]=0.
\end{split}\end{equation}

\begin{figure}[htbp]
\begin{center}
\includegraphics[width=0.8\textwidth]{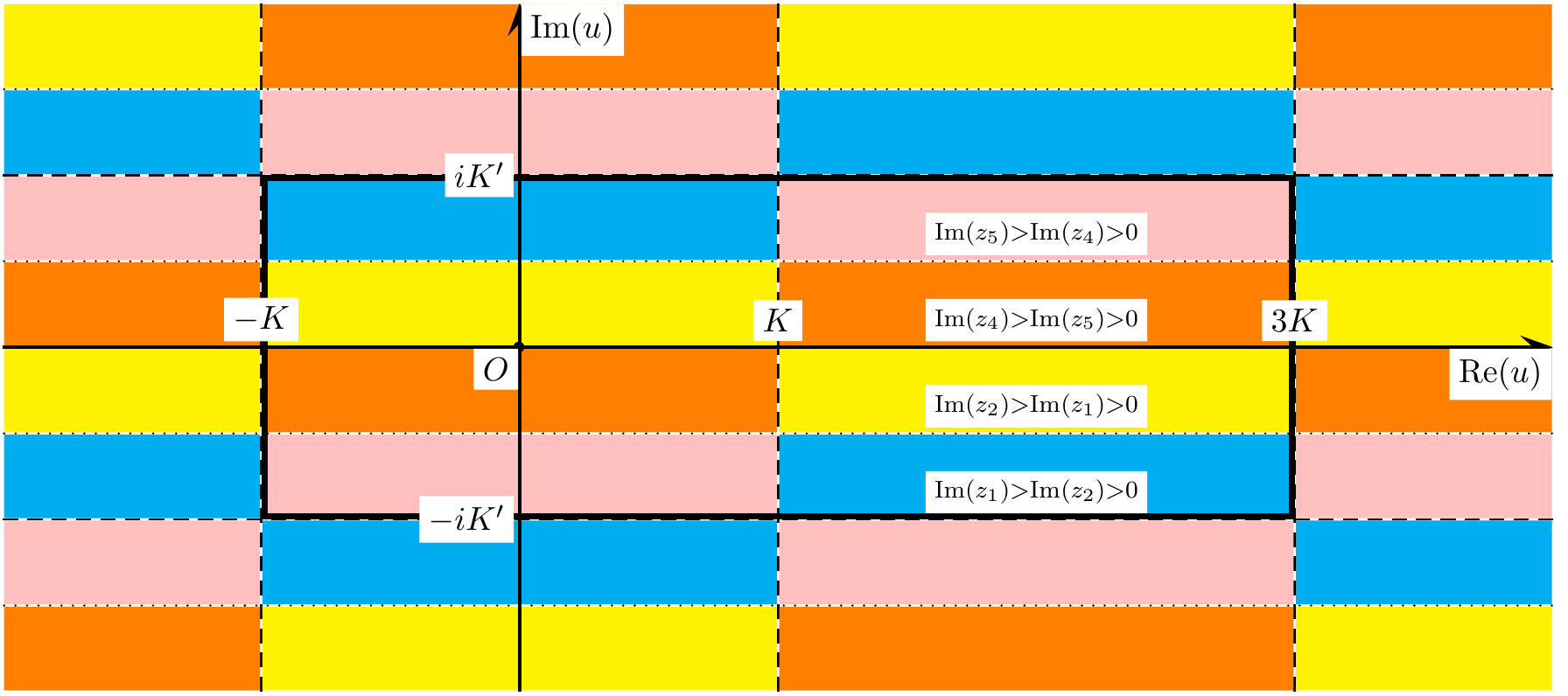}
\caption{The analytic regions on the plane: in cyan: $\im z_1 > \im z_2 >0 $; in pink: $ -\im z_1 > -\im z_2 >0 $; in orange: $-\im z_2 > -\im z_1 >0 $; in yellow: $\im z_2 > \im z_1 >0 $. The horizontal lines (parallel to the $\re u$ axis) are given by $\im u = N/2$, where $N= 0, \pm 1, \pm 2, $\dots  } \label{fig:sn0}
\end{center}
\end{figure}

It is easy to check that the Gauss factors of $T(t,u)$ satisfy:
\begin{equation}\label{eq:TAt}\begin{aligned}
i \frac{\partial T_A^\pm}{ \partial t } - [ \lambda \mathcal{J}, T_A^\pm(\lambda,t)] &=0, &\quad
i \frac{\partial S_A^\pm}{ \partial t } - [ \lambda \mathcal{J}, S_A^\pm(\lambda,t)] &=0, &\quad
i \frac{\partial D_A^\pm}{ \partial t } &=0, \\
i \frac{\partial T_B^\pm}{ \partial t } - [ \lambda \mathcal{J}, T_B^\pm(\lambda,t)] &=0, &\quad
i \frac{\partial S_B^\pm}{ \partial t } - [ \lambda \mathcal{J}, S_B^\pm(\lambda,t)]= &0, &\quad
i \frac{\partial D_B^\pm}{ \partial t } &=0.
\end{aligned}\end{equation}
In other words $D_A^\pm (\lambda)$ and $D_B^\pm (\lambda)$ are generating functionals of the integrals of motion for the KSM.

\subsection{3.5 The inverse scattering problem for CBC}

Again we will reduce the inverse scattering problem to a RHP. The important difference between VBC and CBC case is that we have to formulate the RHP on the Riemannian surface of genus 1. Equivalently we can formulate it on the fundamental rectangle on the complex $\lambda$-plane, see Figure \ref{fig:sn0}. Now we have 4 domains of analyticity painted by different colors $\mathcal{A}^\pm$ and $\mathcal{B}^\pm$. The lines that separate these domains constitute the continuous spectrum of $L$ with CBC.
The corresponding RHP requires the sewing functions that  relate the FAS on both sides of those lines, see Figure \ref{fig:2}:

\begin{figure}
\begin{center}
\includegraphics[width=0.5\textwidth]{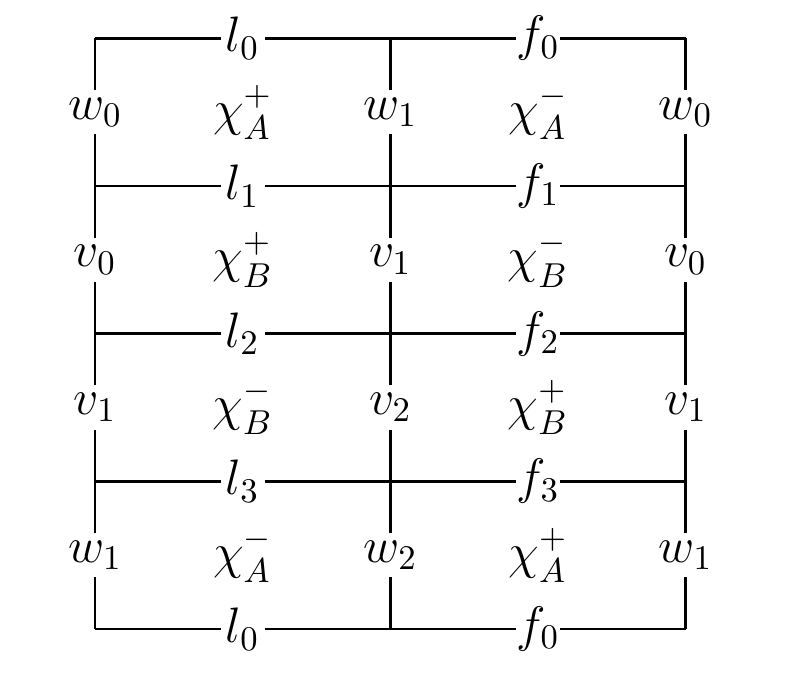}
\caption{The analytic regions on fundamental rectangular. By $f_k$, $l_k$, $k=0,1,2,3$ and $v_j$, $w_j$, $j=0,1,2$ we have denoted the segments separating the analyticity regions of the FAS $\chi^\pm_A$ and $\chi^\pm_B$.  } \label{fig:2}
\end{center}
\end{figure}

\begin{equation}\label{eq:RHA}\begin{aligned}
 \chi^+_A(x,t,u) &= \chi^-_A(x,t,u) G_A(t,u), & \quad  G_A &= \hat{S}_A^- S_A^+(t,u),  &\qquad u & \in l_0 \cup f_0; &\quad u &\in w_0\cup w_1; \\
 \chi^{\prime +}_A(x,t,u) &= \chi^{\prime ,-}_A(x,t,u) G'_A(t,u), &\quad G_A' &= \hat{T}_A^+ T_A^-(t,u),  &\qquad u &\in l_0 \cup f_0; &\quad u &\in w_0\cup w_1; \\
\chi^+_B(x,t,u) &= \chi^-_B(x,t,u) G_B(t,u), & \quad  G_B &= \hat{S}_B^- S_B^+(t,u), &\qquad u &\in l_2 \cup f_2; &\quad u &\in v_0\cup v_1; \\
 \chi^{\prime +}_B(x,t,u) &= \chi^{\prime ,-}_B(x,t,u) G'_B(t,u), &\quad G_B' &= \hat{T}_B^+ T_B^-(t,u). &\qquad u &\in l_2 \cup f_2; &\quad u &\in v_0\cup v_1,
\end{aligned}\end{equation}
and
\begin{equation}\label{eq:RHB}\begin{aligned}
 \chi^+_A(x,t,u) &= \chi^+_B(x,t,u) G^+_{AB}(t,u), & \quad  G_A &= \hat{S}_B^+ S_A^+(t,u),  &\qquad u &\in l_1 \cup f_3;\\
 \chi^-_A(x,t,u) &= \chi^-_B(x,t,u) G^-_{AB}(t,u), & \quad  G_A &= \hat{S}_B^- S_A^-(t,u),  &\qquad u &\in f_1 \cup f_3.
\end{aligned}\end{equation}
where the segments $l_j, f_j$, $j=0,1,2,3$ and $w_k, v_k$, $k=0,1,2$ are shown on Figure \ref{fig:2}. The $t$-dependence of the sewing functions $ G_A(t,u)$, $ G_A'(t,u)$, $ G_B(t,u)$, $ G_B'(t,u)$, $ G^+_{AB}(t,u)$ and $ G^-_{AB}(t,u)$ follows from equations (\ref{eq:TAt}).

The fact that eigenvalues of the asymptotic operators are expressed in terms of Jacobi elliptic functions has as a consequence that the RHP is formulated on the genus 1 Riemannian surface, or equivalently, on the fundamental rectangular with identified opposite sides.


Of course these conditions are doubly periodic as the Jacobi elliptic functions. They may provide a basis for applying the dressing method for deriving the soliton solutions for NBC.

Note that these results are valid for $(2n+1)\times (2n+1)$ Lax operators whose potentials are compatible with the symmetric spaces $SO(2n+1)/S(O(2n-1) \times O(2))$. Indeed, it is easy to check that the characteristic polynomials of the corresponding  asymptotic operators $L_\pm$ take the form:
\begin{equation} \label{curveN}
\det(Q_\pm -\lambda J-z \openone)=-z^{2n-3}(z^4-(\lambda^2+2a_2)z^2+a_4)=0,
\end{equation}
where $a_2$ and $a_4$ are given by:
\begin{equation}\label{eq:CPla}\begin{split}
a_2= (\vec{p},\vec{q}), \qquad b_2^2 = (\vec{q}^T s_0 \vec{q}) (\vec{p}^T s_0 \vec{p}) ,\qquad
a_4=  a_2^2 -b_2^2 .
\end{split}\end{equation}

It will be interesting to consider the particular cases, when the Jacobi elliptic functions turn into trigonometric functions or hyperbolic trigonometric functions. These and other related questions will be considered in next publications.

\section{4. Discussion and conclusion}

We demonstrated that Shabat's method for constructing FAS can be generalized to Lax operators related to BD.I-type symmetric spaces and responsible for the integrability of KSM, which describes spin-1 BEC. The generalized  KSM with $2n-1$ components, $n>2$ describe higher spin BEC. For example 5-component KSM describes spin-2 BEC. In fact the treatment of generalized KSM does not change very much the results described above. In fact, from eq. (\ref{curveN}) we find that the set of non-vanishing eigenvalues of the operators $L_\pm$ are the same for any values of $n\geq 2$. What changes is the number of vanishing eigenvalues which is $2n-3$. As a result the factors in the  Gauss decompositions (\ref{eq:Ttu}) will be of generalized form containing central block of dimension $(2n-3) \times (2n-3)$, similarly to the form for the vanishing boundary conditions.

Solving the generalized RHP formulated by the equations (\ref{eq:RHA}) and (\ref{eq:RHB}) does not seem an easy task.  Again one may try to apply the Zakharov-Shabat dressing method \cite{ZaSha1, ZaSha12, Za*Mi}, see also \cite{I04}. The corresponding projectors obviously would be rather complicated functions of Jacobi elliptic functions.

We started with potentials satisfying generic CBC and demonstrated that the spectrum of the corresponding Lax operators is on genus-1 Riemannian surface. We can plot it on the complex $\lambda$-plane by: a) first plotting it on the fundamental rectangle (see Figure \ref{fig:sn0} and b) extending it to the whole complex plane using the double periodicity. We can also outline the spectrum of the Lax operators related to the symmetric spaces $SO(2n+1)/S(O(2n-3) \times O(4))$. The corresponding MNLS will be matrix generalizations of the KSM. The resulting spectrum will be on Riemannian surfaces of higher genus.

On the other hand we could consider quasi-periodic boundary conditions, see \cite{DMNe, Dub81e, Dub85e, BBEIM, MatSmi, MatSmi2, AS01, 152, 152b, 152c} and the numerous references therein. There the potential $Q(x,t)$ is a multi-quasi-periodic function related to a higher genus curve while the continuous spectrum of the Lax operator fills up a set of segments on the real axis in $\mathbb{C}$. It will be instructive to interrelate these two important classes of boundary conditions.

It may become easier if we consider limiting cases in which Jacobi elliptic functions become trigonometric, or hyper trigonometric functions. Before doing such simplifications however one must be careful. Indeed, the spin-1 BEC reduces to the KSM model after several assumptions for the physical constants (scattering lengths) of the BEC. So there comes up the question: will such simplifications as the one mentioned above involve additional constraints on the experimental parameters of the BEC.

There are additional open problems which could be analyzed. These concern MNLS related to some of the other classes of symmetric spaces (e.g. of C.I or D.III types) with additional Mikhailov-type reductions \cite{Mikh, Pliska12, VG-RomJP, Pliska16}.  The extension of these results to the corresponding CBC cases, along with the treatment of (possibly multiple) branching points  will be another nontrivial problem.

The last issue that we would like to mention here concerns the generalized Fourier transforms for systems with CBC. For the scalar case the completeness of the squared solutions  was derived by Konotop and Vekslerchik \cite{KonVe}. The difficulties here are concerned with the nonlinearity of the phase space and the proper use of the recursion operators.

\begin{acknowledgments}
One of us (VSG) acknowledges the support from Bulgarian Science Foundation under contract NSF   KP-06N42-2. This work was supported by the Ministry of Science and Higher Education of the Russian Federation (grant agreement No. FSRF-2020-0004).

\end{acknowledgments}

\appendix

\section{A. On BD.I symmetric spaces}\label{ap:BDI}

We will start with the basic properties of the simple Lie algebras $so(2n+1)$ \cite{Helg}. The root systems of these algebras is known as the $B_n$-series in Cartan classification. The sets of simple and positive roots $\Delta^+$ are as follows:
\begin{equation}\label{eq:Bn0}\begin{split}
\alpha_j = e_j - e_{j+1}, \quad j=1,\dots n-1, \quad \alpha_n=e_n, \qquad
 \Delta^+ \equiv \{ e_j \pm e_k, \quad 1 \leq j < k \leq n; \quad e_k \}.
\end{split}\end{equation}
For the sake of convenience we have modified the definition of orthogonality. We will say that $X\in so(2n+1)$ if $X + S_0 X^T S_0 =0$, where the matrix $S_0$ is given by:
\begin{equation}\label{eq:S0}\begin{split}
 S_0 = \sum_{k=1}^{2n+1} (-1)^k E_{k,2r-k} = \left(\begin{array}{ccc} 0 & 0 & 1 \\ 0 & -s_0 & 0 \\ 1 & 0 & 0   \end{array}\right)
\end{split}\end{equation}
where $s_0$ and $E_{km}$ were defined in (\ref{eq:KSm}). With this definition the Cartan subalgebra of $so(2n+1)$ is represented by diagonal matrices and the Cartan-Weyl basis takes the form:
\begin{equation}\label{eq:SOn}\begin{aligned}
 H_{j} &= E_{jj} - E_{\bar{j},\bar{j}}, \quad E_{e_j-e_k} = E_{jk} - (-1)^{k+j} E_{\bar{k}, \bar{j}}, \quad E_{e_j+e_k} = E_{j\bar{k}} - (-1)^{k+j} E_{k, \bar{j}}, \quad E_{e_j} = E_{j,n+1} -(-1)^{n+j} E_{n+1, \bar{j}},
\end{aligned}\end{equation}
where $\bar{k} = 2n+1-k$.

It is well known that the KSM is related to the symmetric space BD.I which is isomorphic to $SO(2n+1)/S(O(2n-1) \times O(2))$; for details about the structure of this and other symmetric spaces see \cite{Helg}.

\section{B. Gauss decompositions }\label{ap:Gauss}

The Gauss decompositions mentioned above have natural group-theoretical
interpretation and can be generalized to any semi-simple Lie algebra. It
is well known that if given group element allows Gauss decompositions then
its factors are uniquely determined. Below we write down the explicit
expressions for the matrix elements of $T^\pm(\lambda ) $, $\hat{S}^\pm(\lambda) $, $D^\pm(\lambda  ) $ through the matrix  elements of $T(\lambda ) $:
\begin{equation}\label{eq:Gaus0}\begin{aligned}
T^-_{pj}(\lambda ) &= {1 \over m_{j}^+(\lambda ) } \left\{
\begin{array}{ccccc} 1, & 2, & \dots , & j-1, & p \\ 1, & 2, & \dots , &
j-1, & j \end{array}\right\}_{T(\lambda )}^{(j)}, &\quad
\hat{S}^+_{jp}(\lambda ) &= {1  \over m_{j}^+(\lambda )}
\left\{\begin{array}{ccccc} 1, & 2, & \dots ,& j-1, & j \\
1, & 2, & \dots , & j-1, & p \end{array}\right\}_{T(\lambda )}^{(j)},\\
T^+_{pj}(\lambda ) &= {1 \over m_{n-j+1}^-(\lambda ) } \left\{
\begin{array}{cccc} p, & j+1, & \dots ,  & n \\ j, & j+1, & \dots ,
 & n \end{array}\right\}_{T(\lambda )}^{(n-j+1)}, &\quad
\hat{S}^-_{jp}(\lambda ) &= {1 \over m_{n-j+1}^-(\lambda ) }
\left\{ \begin{array}{cccc} j, & j+1, & \dots , & n \\
p, & j+1, & \dots ,  & n \end{array}\right\}_{T(\lambda)}^{(n-j+1)},\\
 D^+(\lambda ) &= \diag (D_1^+, D_2^+, \dots, D_n^+), &\quad
D^-(\lambda ) &= \diag (D_1^-, D_2^-, \dots, D_n^-),
\end{aligned}\end{equation}
where $m_j^+(\lambda )$ (resp. $m_j^-(\lambda )$) are the principle upper (resp. principle lower) minors of $T(\lambda)$ of order $j$ and
\begin{equation}\label{eq:10.7}
D^+_{j}(\lambda ) = {m_j^+(\lambda )  \over m^+_{j-1}(\lambda ) }, \quad
D^-_{j}(\lambda ) = {m_{n-j+1}^-(\lambda )  \over m^-_{n-j}(\lambda )},
\quad
\left\{ \begin{array}{cccc} i_1, & i_2, & \dots ,&i_k, \\
j_1, & j_2, & \dots , & j_k \end{array}\right\}_{T(\lambda)}^{(k)} =
\det \left| \begin{array}{cccc} T_{i_ij_1} & T_{i_1j_2} & \dots T_{i_1j_k}
\\ T_{i_2j_1} & T_{i_2j_2} & \dots T_{i_2j_k} \\ \vdots & \vdots & \ddots
& \vdots \\ T_{i_kj_1} & T_{i_kj_2} & \dots T_{i_kj_k} \end{array} \right|
\end{equation}
is the minor of order $k $ of $T(\lambda ) $ formed by the rows $i_1 $, $
i_2 $, \dots , $i_k $ and the columns $j_1 $, $j_2 $, \dots, $j_k $.

{}From the formulae above we arrive to the following
\begin{corrollary}\label{cor:1.1}
In order that the group element $T(\lambda ) \in SL(n,\bbbc) $
allows the first  (resp. the second)  Gauss decomposition
of $T(t,\lambda)$ is necessary and sufficient that all upper- (resp.
lower-) principle minors $m_k^+(\lambda ) $ (resp. $m_k^-(\lambda
) $) are not vanishing.
\end{corrollary}

These formulae hold true also if we need to construct the Gauss
decomposition of an element of the orthogonal $SO(n) $ group. Here we
just note that if $T(\lambda )\in SO(n) $ then
\begin{equation}\label{eq:def-G}
S_0 (T(\lambda ))^T S_0^{-1} = T^{-1}(\lambda ).
\end{equation}
One can check that if $T(\lambda ) $ satisfies (\ref{eq:def-G})
then each of the factors $T^\pm(\lambda ) $, $S^\pm(\lambda ) $
and $D^\pm(\lambda ) $ also satisfy (\ref{eq:def-G}) and thus
belong to the same group $\fr{G} $. In addition we have the
following interrelations between the principal minors of
$T(\lambda ) $:
\begin{eqnarray}\label{eq:m_k}
m_j^\pm(\lambda ) = m_{n-j}^\pm(\lambda ), \qquad \text{for} \quad SO(n),
\qquad \mbox{and} \qquad
m_j^\pm(\lambda ) = m_{n-j}^\pm(\lambda ), \qquad \text{for} \quad SP(n).
\end{eqnarray}

\section{C. Gauss decompositions for BD.I symmetric spaces}\label{ap:BDGauss}

The Gauss decompositions compatible with a given symmetric space and used in (\ref{eq:TGauss}) are generalizations of the well known ones. Here we will briefly outline how one can derive the equations (\ref{eq:FAS_J}). First we remember that $T(t,\lambda) \in SO(2n+1)$ and therefore its inverse $\hat{T}(t,\lambda) = S_0 T^T(t,\lambda)S_0$, see eq. (\ref{eq:25.1}). Let us now use the Cartan-Weyl basis to evaluate $S^\pm_J(t,\lambda)$ and $S^\pm_J(t,\lambda)$ in (\ref{eq:SJpm}):
\begin{equation}\label{eq:GausBDI}\begin{aligned}
S^+_J(t,\lambda) & = \left(\begin{array}{ccc} 1 & \vec{\tau}^{+,T} & c_1^+ \\ 0 & \openone & s_0\vec{\tau}^+ \\ 0 & 0 & 1 \end{array}\right), &\qquad T^+_J(t,\lambda) & = \left(\begin{array}{ccc} 1 & \vec{\rho}^{-,T} & \tilde{c}_1^+ \\ 0 & \openone &  s_0\vec{\rho}^- \\ 0 & 0 & 1 \end{array}\right), \\
S^-_J(t,\lambda) & = \left(\begin{array}{ccc} 1 & 0 & 0 \\ \vec{\tau}^{-}  & \openone & 0 \\ c_1^- & \vec{\tau}^{-,T} s_0 & 1 \end{array}\right), & \qquad T^-_J(t,\lambda) & = \left(\begin{array}{ccc} 1 & 0 & 0 \\ \vec{\rho}^{+}  & \openone & 0 \\ \tilde{c}_1^- & \vec{\rho}^{+,T} s_0 & 1 \end{array}\right),
\end{aligned}\end{equation}
where $c_1^\pm = \frac{1}{2} (\vec{\tau}^{\pm ,T} s_0 \vec{\tau}^\pm)$, $\tilde{c}_1^\pm = \frac{1}{2} (\vec{\rho}^{\mp,T} s_0 \vec{\rho}^\mp)$. The first relation in (\ref{eq:TGauss}) is equivalent to $T S_J^+ = T_J^- D_J^+$. From the first column we immediately get $\vec{\rho}^+ = \vec{b}^+/m_1^+$. Considering the last column of $T S_J^- = T_J^+ D_J^-$ we obtain $\vec{\rho}^- = -\vec{b}^-/m_1^-$. The expressions for $\vec{\tau}^\pm$ are obtained similarly considering the relations $\hat{T} T_J^+ = S_J^- \hat{D}_J^-$ and  $\hat{T} T_J^- = S_J^+ \hat{D}_J^+$.

\section{D. On Jacobi elliptic functions}\label{ap:Jacobi}

Here we will list some of the basic properties of Jacobi elliptic functions, for more details see  \cite{SFe, Akh}. We start with:
\begin{equation*}
\sn^2(u|k)+\cn^2(u|k)=1,\quad \dn^2(u|k)+k^2\sn^2(u|k)=1.
\end{equation*}
Their derivatives are given by:
\begin{equation*}
\begin{aligned}
&\dfrac{d}{du}\sn(u|k)=\cn(u|k)\dn(u|k),\quad
\dfrac{d}{du}\cn(u|k)=-\sn(u|k)\dn(u|k),\quad
\dfrac{d}{du}\dn(u|k)=-k^2\sn(u|k)\cn(u|k).
\end{aligned}
\end{equation*}
One can evaluate them for special values of the argument $u$ as follows:
\begin{align*}
&\sn(0|k)=0,\quad &&\cn(0|k)=1,\quad &&\dn(0|k)=1,\\
&\sn(K/2|k)=\dfrac1{\sqrt{1+k'}},\quad &&\cn(K/2|k)=\sqrt{\dfrac{k'}{1+k'}},\quad &&\dn(K/2|k)=\sqrt{k'},\\
&\sn(K|k)=1,\quad &&\cn(K|k)=0,\quad &&\dn(K|k)=k',\\
&\sn(iK'/2|k)=i/\sqrt{k},\quad &&\cn(iK'/2|k)=\sqrt{\dfrac{1+k}k},\quad &&\dn(iK'/2|k)=\sqrt{1+k},\\
&\sn(iK'|k)=\infty,\quad &&\cn(iK'|k)=\infty,\quad &&\dn(iK'|k)=\infty.
\end{align*}
where $K$ and $K'$ are the elliptic integrals of first kind:
\begin{equation*}
K=\int_0^1\dfrac{dt}{\sqrt{(1-t^2)(1-k^2t^2)}},\qquad K'=\int_0^1\dfrac{dt}{\sqrt{(1-t^2)(1-(k')^2t^2)}},
\end{equation*}
and $k^2+(k')^2=1$. In addition $ \sn(-u|k)=-\sn(u|k)$, $\cn(-u|k)=\cn(u|k)$, $\dn(-u|k)=\dn(u|k)$ and
\begin{align*}
&\sn(u\pm 2K|k)=-\sn(u|k),\quad && \sn(u\pm 2iK'|k)=\sn(u|k),\quad &&\cn(u\pm 2K|k)=-\cn(u|k),\\
& \cn(u\pm 2iK'|k)=-\cn(u|k),\quad &&\dn(u\pm 2K|k)=\dn(u|k),\quad && \dn(u\pm 2iK'|k)=-\dn(u|k);\\
&\sn(u+ K|k)=\dfrac{\cn(u|k)}{\dn(u|k)},\quad && \sn(u+iK'|k)=\dfrac{1}{k\sn(u|k)},\quad
&&\cn(u+ K|k)=- k'\dfrac{\sn(u|k)}{\dn(u|k)},\\
& \cn(u+iK'|k)=-i\dfrac{\dn(u|k)}{k\sn(u|k)},\quad &&\dn(u+K|k)=\dfrac{k'}{\dn(u|k)},\quad && \dn(u+iK'|k)=-i\dfrac{\cn(u|k)}{\sn(u|k)}.
\end{align*}
In order to evaluate the imaginary parts of the eigenvalues $z_k$ we will need also:
\begin{align*}
&\sn(x+iy|k)=\dfrac{s\cdot d_1+ic\cdot d\cdot s_1\cdot c_1}{c_1^2+k^2\cdot s^2\cdot s_1^2},\quad \cn(x+iy|k)=\dfrac{c\cdot c_1-is\cdot d\cdot s_1\cdot d_1}{c_1^2+k^2\cdot s^2\cdot s_1^2},\quad
\dn(x+iy|k)=\dfrac{d\cdot c_1\cdot d_1-ik^2\cdot s\cdot c\cdot s_1}{c_1^2+k^2\cdot s^2\cdot s_1^2},
\end{align*}
where $s=\sn(x|k)$,  $c=\cn(x|k)$, $d=\dn(x|k)$, and $s_1=\sn(y|k')$, $c_1=\cn(y|k')$ and  $d_1=\dn(y|k')$.



\end{document}